\newtheorem{lemma}{Lemma}
\newtheorem{theorem}{Theorem}
\newtheorem{corollary}{Corollary}
\newtheorem{definition}{Definition}
\title{Graph Reconstruction via MIS Queries}
\author{Christian Konrad\footnote{School of Computer Science, University of Bristol, \texttt{christian.konrad@bristol.ac.uk}} \and Conor O'Sullivan\footnote{School of Computer Science, University of Bristol, \texttt{je20344@bristol.ac.uk}} \and Victor Traistaru\footnote{School of Computer Science, University of Bristol, \texttt{victoryt2001@gmail.com}}}
\date{}
\DeclareMathOperator{\poly}{poly}
\begin{document}

\maketitle

\begin{abstract}
In the \textsf{Graph Reconstruction} (\textsf{GR}) problem, a player initially only knows the vertex set $V$ of an input graph $G=(V, E)$ and is required to learn its set of edges $E$. To this end, the player submits queries to an oracle and must deduce $E$ from the oracle's answers.

Angluin and Chen [Journal of Computer and System Sciences, 2008] resolved the number of \textsf{Independent Set} (\textsf{IS}) queries necessary and sufficient for \textsf{GR} on $m$-edge graphs. In this setting, each query consists of a subset of vertices $U \subseteq V$, and the oracle responds with a boolean, indicating whether $U$ is an independent set in $G$. They gave algorithms that use $O(m \cdot \log n)$ \textsf{IS} queries, which is best possible.

In this paper, we initiate the study of \textsf{GR} via \textsf{Maximal Independent Set} (\textsf{MIS}) queries, a more powerful variant of \textsf{IS} queries. Given a query $U \subseteq V$, the oracle responds with any, potentially adversarially chosen, maximal independent set $I \subseteq U$ in the induced subgraph $G[U]$.

We show that, for \textsf{GR}, \textsf{MIS} queries are strictly more powerful than \textsf{IS} queries when parametrized by the maximum degree $\Delta$ of the input graph. We give tight (up to poly-logarithmic factors) upper and lower bounds for this problem:
\begin{enumerate}
 \item We observe that the simple strategy of taking uniform independent random samples of $V$ and submitting those to the oracle yields a non-adaptive randomized algorithm that executes $O(\Delta^2 \cdot \log n)$ queries and succeeds with high probability. This should be contrasted with the fact that $\Omega(\Delta \cdot n \cdot \log(\frac{n}{\Delta}))$ \textsf{IS} queries are required for such graphs, which shows that \textsf{MIS} queries are strictly more powerful than \textsf{IS} queries. Interestingly, combining the strategy of taking uniform random samples of $V$ with the probabilistic method, we show the existence of a deterministic non-adaptive algorithm that executes $O(\Delta^3 \cdot \log(\frac{n}{\Delta}))$ queries.
 
 \item Regarding lower bounds, we prove that the additional $\Delta$ factor when going from randomized non-adaptive algorithms to deterministic non-adaptive algorithms is necessary. We show that every non-adaptive deterministic algorithm requires $\Omega(\Delta^3 / \log^2 \Delta)$ queries.
 For arbitrary randomized adaptive algorithms, we show that $\Omega(\Delta^2)$ queries are necessary in graphs of maximum degree $\Delta$, and that $\Omega(\log n)$ queries are necessary, even when the input graph is an $n$-vertex cycle. 
\end{enumerate}
 \end{abstract}
 
 \newpage

\section{Introduction}
Query algorithms for graph problems have recently received significant attention. In this setting,  algorithms are granted access to the input graph solely via a (usually easy-to-compute) subroutine, which is referred to as the oracle, and the complexity of an algorithm is measured by the number of subroutine/oracle calls.  

In the literature, a large number of different query models have been considered. Queries can either be {\em local} or {\em global}, depending on whether they reveal only local information, e.g., vertex degree queries \cite{f06} or queries that, for any $i$, reveal the $i$th neighbour of a vertex \cite{gr06}, or global information, e.g., (bipartite) independent set queries \cite{ac08,bhrrs18,ab19, amm22} or maximal matching queries \cite{bk20, kns23}. Depending on the application, it is generally desirable to obtain {\em non-adaptive} query algorithms, i.e., algorithms where the different queries do not depend on each other's outcomes, since such queries can be executed simultaneously and therefore admit straightforward parallel implementations. We also say that a query algorithm requires {\em $k$ rounds of adaptivity}, for some integer $k$, if the queries executed by the algorithm can be partitioned into $k$ groups such that the queries in the $i$th group only depend on the outcomes of the queries in groups $1, \dots, i-1$. 

\subparagraph*{Graph Reconstruction.}
In this work, we consider query algorithms for the \textsf{Graph Reconstruction} (\textsf{GR}) problem. In \textsf{GR}, a player initially only knows the vertex set $V$ of a graph $G=(V, E)$ and is tasked with learning the edge set $E$, by submitting a sequence of adaptive or non-adaptive queries to the oracle, and by deducing the edge set $E$ from the oracle's answers. 

Query algorithms for \textsf{GR} have been extensively studied under various query models, including \textsf{Independent Set} (\textsf{IS}) queries \cite{gk97,bakaf01,abkrs04,aa05,ac08,ab19}, 
\textsf{Distance} queries \cite{rs07,kmz18,mz21,rlyw21}, 
 \textsf{Betweenness} queries \cite{abrs16,rylw22}, and in the setting where an algorithm receives random vertex-induced subgraphs or submatrices of the adjacency matrix \cite{ms22}. Very relevant to our work are \textsf{IS} queries, where a query consists of a subset of vertices $U \subseteq V$, and the oracle returns a boolean, indicating whether $U$ is an independent set in $G$. \textsf{IS} queries for \textsf{GR} were originally studied for recovering simple graphs, such as matchings and Hamiltonian cycles \cite{gk97,bakaf01,abkrs04}, since these settings have direct applications to genome sequencing. Angluin and Chen \cite{ac08} were the first to consider general graphs and showed that $m$-edge graphs can be reconstructed with $O(m \cdot \log n)$ \textsf{IS} queries via an adaptive deterministic algorithm or a randomized algorithm with limited adaptivity (see also \cite{ab19} who reduced the number of adaptive rounds), and this is also best possible.
 
 \subparagraph*{MIS Queries.} In this work, we inititate the study of \textsf{GR} under \textsf{Maximal Independent Set}  (\textsf{MIS}) queries. In this setting, similar to \textsf{IS} queries, a query also consists of a subset of vertices $U \subseteq V$. The oracle, however, responds with any, potentially adversarially chosen, maximal independent set $I \subseteq U$ in the subgraph induced by the queried vertices $G[U]$. 
 
 \textsf{MIS} queries are global queries and are similar in spirit to the \textsf{Maximal Matching} queries considered in \cite{bk20,kns23}, where the oracle returns a maximal matching in the subgraph induced by the query vertices \cite{bk20}, or in the subgraph spanned by the queried edges that are also contained in the input graph \cite{kns23}.
They are at least as powerful as \textsf{IS} queries since an \textsf{IS} query $U \subseteq V$ can be answered by an \textsf{MIS} oracle by exploiting the connection:
$$\text{IS-Query}(U) = \texttt{true} \quad \Longleftrightarrow \quad  \text{MIS-Query}(U) = U \ . $$
We observe that computing an \textsf{MIS} is a simple task that can be solved in linear time by a \textsf{Greedy} algorithm and even in sublinear time on graphs of bounded-neighborhood independence \cite{as19}. The \textsf{MIS} problem can also be solved efficiently in many restricted computational models, e.g., an \textsf{MIS} can be computed in $O(\log \log \Delta)$ rounds in the \textsf{Congested-Clique} model \cite{ggkmr18}, in $O(\log \Delta)$ rounds in the \textsf{LOCAL} model of distributed computing \cite{g16,rg20}, and in $O(\log \log n)$ passes in the semi-streaming model \cite{acgmw15,akns24}. This task can therefore be regarded as a building block of more elaborate algorithms and may be outsourced to an oracle that maintains an efficient implementation. 
 
Our key motivation is to answer the question as to whether \textsf{MIS} queries are strictly more powerful than \text{IS} queries, i.e., whether enhancing the oracle answer by providing a maximal independent set rather than just indicating whether the query set is independent yields significant savings in the number of queries required.






\subsection{Our Results}
In this paper, we show that \textsf{MIS} queries can be strictly more powerful than \textsf{IS} queries, but this is also not always the case. 

As previously mentioned, Angluin and Chen showed that $O(m \log n)$ \textsf{IS} queries are sufficient for reconstructing a graph on $m$ edges. One of our lower bound results (\textbf{Theorem~\ref{thm:lb-1}}) implies that there are $m$-edge graphs that require $\Omega(m)$ \textsf{MIS} queries to be reconstructed, which show that, up to a logarithmic factor, \textsf{MIS} queries are not more powerful than \textsf{IS} queries on the class of $m$-edge graphs. However, when considering the class of graphs with maximum degree $\Delta$, for some integer $\Delta$, significant savings can be achieved:


\begin{table}
\begin{center} \begin{tabular}{c|l|l}
 & Algorithm  & Lower Bound \\
 \hline
 Randomized & $O(\Delta^2 \log n)$ (\textbf{Theorem~\ref{thm:ub}}) & $\Omega(\Delta^2 + \log n)$ (\textbf{Theorems~\ref{thm:lb-2} and \ref{thm:lb-1}}) \\
 Deterministic & $O(\Delta^3 \log(\frac{n}{\Delta}))$ (\textbf{Corollary~\ref{cor:ub-2}}) & $\Omega(\Delta^3 / \log^2 \Delta)$ (\textbf{Corollary~\ref{cor:det-lb}})
\end{tabular}

\vspace{0.3cm}
\caption{Overview of our results. Our results are parametrized by the maximum degree $\Delta$ of the input graph. Both our algorithms are non-adaptive and require knowledge of $\Delta$ in advance. We also give adaptive counterparts with similar query complexity that do not require advanced knowledge of $\Delta$. The lower bounds for randomized algorithms hold for adaptive algorithms. The lower bound for deterministic algorithms holds for non-adaptive algorithms. \label{tab:results} }
\end{center} \end{table}

\vspace{0.1cm}
\noindent \textbf{Algorithms.} We give a randomized algorithm for reconstructing an $n$-vertex graph with maximum degree $\Delta$ that uses $O(\Delta^2 \log n)$ non-adaptive queries and succeeds with high probability\footnote{As it is standard, we say that an event related to the input graph $G=(V, E)$ occurs {\em with high probability} if the probability is $1-\frac{1}{\poly n}$, where $n = |V|$. } (\textbf{Theorem~\ref{thm:ub}}). This result shows that, for the class of graphs of maximum degree $\Delta$, \textsf{MIS} queries are stronger than \textsf{IS} queries since an information-theoretic lower bound similar to the one given in \cite{ac08} shows that $\Omega(\Delta n \log(\frac{n}{\Delta}))$ \textsf{IS} queries are needed to reconstruct a graph with maximum degree $\Delta$ (\textbf{Corollary~\ref{cor:is-queries}} in Appendix~\ref{app:is-queries}).  Observe that, for the class of constant degree graphs, $O(\log n)$ \textsf{MIS} queries are sufficient, but $\Omega(n \log n)$ \textsf{IS} queries are necessary. 

We also investigate whether randomization is necessary to obtain algorithms with low query complexity. Using the probabilistic method, we show that there exists a non-adaptive deterministic query algorithm that executes $O(\Delta^3 \log(\frac{n}{\Delta}))$ queries (\textbf{Corollary~\ref{cor:ub-2}}). 

Our non-adaptive randomized and deterministic algorithms assume that the maximum degree $\Delta$ is known in advance. This cannot be avoided since, as proved in Theorem~\ref{thm:lb-1}, any algorithm that executes $o(n^2)$ queries cannot solve all instances with $\Delta = \Theta(n)$. We show that the non-adaptive algorithms above can be turned into adaptive algorithms with query complexities that are by at most a constant factor larger. These adaptive algorithms do not require any information about $\Delta$ in order to operate and only require $O(\log \Delta)$ rounds of adaptivity (\textbf{Corollary~\ref{cor:adaptive-algorithms}}).

\vspace{0.1cm}
\noindent \textbf{Lower Bounds.} We give lower bounds on the number of queries required by both non-adaptive deterministic  algorithms and adaptive randomized algorithms. 

First, we show that every non-adaptive deterministic query algorithm requires $\Omega(\Delta^3 / \log^2 \Delta)$ queries (\textbf{Corollary~\ref{cor:det-lb}}), which renders our deterministic algorithm optimal, up to poly-logarithmic factors. This result together with our non-adaptive randomized algorithm (\textbf{Theorem~\ref{thm:ub}}) establishes a separation result between non-adaptive randomized and non-adaptive deterministic algorithms since our randomized algorithm only requires $O(\Delta^2 \log n)$ queries.

Next, we show that every adaptive randomized query algorithm requires $\Omega(\log n)$ queries, even if the input graph is guaranteed to be an $n$-vertex cycle (\textbf{Theorem~\ref{thm:lb-2}}). 
Furthermore, we show that every adaptive randomized query algorithm requires $\Omega(\Delta^2)$ queries, for any $\Delta$ (\textbf{Theorem~\ref{thm:lb-1}}).
These lower bounds show that the number of queries executed by our randomized algorithm is optimal, up to a logarithmic factor, and that the logarithmic dependency on $n$ cannot be avoided entirely. 

For an overview of our results, see Table~\ref{tab:results}.

\subsection{Techniques}
\textbf{Algorithms.} The starting point for all our algorithms is an algorithm by Angluin and Chen \cite{ac08} for \textsf{GR} that uses $\textsf{IS}$ queries. Our randomized algorithm is in fact identical to their algorithm, up to the use of a different sampling probability. It works as follows: 

The key idea is to learn all the non-edges rather than all the edges of the input graph. To this end, we  sample sufficiently many random vertex-induced subgraphs $G[V_i] \subseteq G$, for integers $i$, such that, for every non-edge $uv \in (V \times V) \setminus E(G)$, the probability that both $u$ and $v$ are contained in $V_i$ and are isolated in $G[V_i]$ is $\Theta(\frac{1}{\Delta^2})$. This is achieved by including every vertex in $V_i$ with probability $\frac{1}{\Delta+1}$. Observe that when $u$ and $v$ are isolated in $G[V_i]$ then the oracle necessarily needs to include both $u$ and $v$ in the returned maximal independent set. The returned maximal independent set therefore serves as a witness that proves that the potential edge $uv$ is not contained in the input graph. We then argue that, after repeating this process $O(\Delta^2 \log n)$ times, we have learnt all non-edges with high probability, which allows us to identify the edges of the input graph by complementing the set of non-edges. 

Angluin and Chen use this idea in the context of \textsf{IS} queries for graphs with maximum degree $\Delta$, where the significantly lower inclusion probability of each vertex into each sample $V_i$  of $\frac{1}{\sqrt{\Delta n}}$ needs to be used. 

Our deterministic algorithm is built on a similar idea. We show that, when taking $\ell = \Theta(\Delta^3 \log(\frac{n}{\Delta}))$ random subsets $V_1, \dots, V_{\ell} \subseteq V$ as above, i.e., by inserting every vertex $v \in V$ into any subset $V_i$ with probability $\frac{1}{\Delta+1}$, then, for any tuple $(\{u, v\}, \{w_1, \dots, w_{2\Delta}\})$ of $2\Delta+2$ disjoint vertices, which we also refer to as a {\em witness}, there exists a set $V_i$ such that $u,v \in V_i$ but $w_1, \dots, w_{2\Delta} \notin V_i$ with positive probability. Since this event happens with non-zero probability, such a family of subsets $V_1, \dots, V_{\ell}$ exists. Our deterministic algorithm then queries all of these subsets $(V_i)_{1 \le i \le \ell}$. We now claim that our algorithm learns every non-edge $uv \in (V \times V) \setminus E$: We know that there exists a set $V_i$ such that $u,v \in V_i$ but $(\Gamma(v) \cup \Gamma(u)) \cap V_i = \varnothing$ since $|\Gamma(v) \cup \Gamma(u)| \le 2\Delta$ holds. The vertices $u$ and $v$ are thus isolated in $G[V_i]$ and therefore necessarily reported in the oracle answer, which provides a proof to the algorithm that the edge $uv$ is not contained in the input graph.

Both our non-adaptive randomized and deterministic algorithms require advanced knowledge of $\Delta$, which, as previously argued, is unavoidable. We turn both of these algorithms into adaptive algorithms that require only $O(\log \Delta)$ rounds of adaptivity and whose total number of queries executed is only by a constant factor larger. This is achieved by successively running our algorithms for the guesses $D = 1, 2, 4, 8, \dots$ for the maximum degree $\Delta$ until a final guess $\Delta < D \le 2 \Delta$ is used, in which case it is easy to see that the graph is correctly reconstructed. While this is a standard doubling argument, the non-trivial part of the argument is to identify when the condition $\Delta < D \le 2 \Delta$ is reached since $\Delta$ is now unknown. Denoting by $F$ the set of non-edges learnt by the execution of one of our algorithms when invoked on the current guess $D$, we show that if the maximum degree in the graph spanned by the edges $(V \times V) \setminus F$ is at least $D$ then we have indeed learnt all the non-edge of the input graph, and thus also reconstructed the graph.

\vspace{0.1cm}
\noindent \textbf{Lower Bounds.} 
We first discuss our $\Omega(\Delta^3 / \log^2 \Delta)$ lower bound for deterministic non-adaptive algorithms. Witnesses, i.e., tuples $(\{u, v\}, \{w_1, \dots, w_{2\Delta}\})$ of $2\Delta+2$ disjoint vertices, play a key role in our lower bound argument. We argue that any deterministic non-adaptive query algorithm must be such that, for every witness $(\{u, v\}, \{w_1, \dots, w_{2\Delta}\})$ of disjoint $2\Delta+2$ vertices, there exists a query $Q_i$ such that $u,v \in Q_i$ and $\{w_1, \dots, w_{2\Delta} \} \cap Q_i = \varnothing$. We call a set of $\ell$ queries that fulfills this property a {\em $\Delta$-Query-Scheme of size $\ell$}. To see that the previous property is true, for the sake of a contradiction, suppose that this is not the case and there exists a witness $(\{u, v\}, \{w_1, \dots, w_{2\Delta}\})$ that does not fulfill these properties, i.e., whenever $u,v$ is included in a query then at least one of the vertices $w_1, \dots, w_{2\Delta}$ is included in this query as well. We claim that the two graphs $G_1$ and $G_2$, where, in both graphs, $w_1, \dots, w_{\Delta}$ are incident on $u$ and $w_{\Delta+1}, \dots, w_{2\Delta}$ are incident on $v$, and $uv$ is an edge in $G_1$ but not in $G_2$ cannot be distinguished\footnote{This construction generates a maximum degree of $\Delta+1$ in $G_1$, which is technically not allowed since we consider algorithms that run on graphs of maximum degree $\Delta$. To circumvent this issue, in our actual proof we therefore relate algorithms that operate on maximum degree $\Delta$ graphs to $(\Delta-1)$-Query-Schemes.}. Indeed, we claim that, for any query executed, the oracle can always report an independent set that contains at most one of the two vertices $u,v$, even on graph $G_2$. This is because whenever $u,v$ is included in a query then at least one of the vertices $(w_i)_{1 \le i \le 2\Delta}$ is also included in this query. The oracle can therefore include $w_i$ in the oracle answer, which implies that at most one of $u$ and $v$ will also be included. Both graphs $G_1$ and $G_2$ are therefore consistent with all oracle answers and are thus indistinguishable, a contradiction.

Our task, therefore, is to prove that any $\Delta$-Query-Scheme must be of size at least $\Omega(\Delta^3 / \log^2 \Delta)$. We achieve this by combining two separate arguments that address small queries, i.e., queries $V_i \subseteq V$ of size at most $|V_i| \le t := C \cdot \frac{n \ln(\Delta)}{\Delta}$ and large queries of size larger than $t$, respectively: For any pair of vertices $\{u,v\}$, small queries are such that they cover many witnesses $(\{u,v\}, \{w_1, \dots, w_{2\Delta} \})$, for many different vertices $(w_i)_{1 \le i \le 2\Delta}$, since the vertices $(w_i)_{1 \le i \le 2 \Delta}$ are chosen from $V \setminus V_i$, however, on average, vertex pairs $\{u,v\}$ cannot be included in many small queries (at most ${t \choose 2}$). In contrast, vertex pairs $\{u,v\}$ can be included in many large queries, however, those queries do not cover many witnesses $(\{u,v\}, \{w_1, \dots, w_{2\Delta} \})$, for vertices $(w_i)_{1 \le i \le 2\Delta}$. We obtain our result by combining these observations, see the proof of Lemma~\ref{lem:lb-query-scheme}.

Next, we discuss our $\Omega(\log n)$ queries lower bound for adaptive randomized algorithms on $n$-vertex cycles. Observe that, in an $n$-vertex cycle $G=(V, E)$, every pair of vertices $u,v \in V$ is such that $\Gamma(u) \neq \Gamma(v)$. Hence, for a query algorithm to be able to distinguish any two vertices of the input graph, the algorithm is required to obtain oracle answer maximal independent sets such that $u$ and $v$ do not behave the same in all returned independent sets. We show that $\Omega(\log n)$ queries are needed to distinguish between all the vertices. Our argument is based on the observation that, given a set of vertices $U \subseteq V$ that are so far indistinguishable and a query set $V_i \subseteq V$, the returned maximal independent set $I_i$ only allows us to differentiate between the vertices of $U$ that are not queried, queried and contained in $I_i$, and queried and not contained in $I_i$. Every query thus allows us to partition any set of vertices that is still indistinguishable into only three subsets, which implies that $\Omega(\log_3 n) = \Omega(\log n)$ queries are needed to distinguish between all the vertices.

Last, we discuss our $\Omega(\Delta^2)$ queries lower bound for arbitrary randomized  adaptive algorithms. We work with the family of graphs $\mathcal{G}$ obtained from the family of all balanced bipartite graphs $G= (A, B, E)$ with $|A| = |B| = \Theta(\Delta)$, where the two bipartitions $A$ and $B$ are turned into two separate cliques. Observe that each such graph has a maximum independent set size of $2$. 
We now claim that, for every non-edge $ab \in (V \times V) \setminus E(G)$ in the input graph $G$, there must exist an independent set $I_i$ returned by the oracle such that $I_i = \{a, b\}$. This needs to be the case since otherwise the algorithm cannot distinguish between $G$ and the graph $G' = G \cup \{ab\}$, i.e., $G$ with the edge $ab$ added, since all query responses are then consistent with both $G$ and $G'$. However, since at most one non-edge can be learnt per query, and $\mathcal{G}$ contains many graphs with $\Omega(\Delta^2)$ non-edges, the result follows. Our actual lower bound is proved for randomized algorithms via an application of Yao's Lemma, which makes the argument slightly more complicated.

\subsection{Recent Developments}
Since a preprint of our work was released on arXiv in early 2024, Michael and Scott \cite{ms24} improved some of our lower bounds by poly-logarithmic factors. They show that, for arbitrary adaptive randomized algorithms, $\Omega(\Delta^2 \log(n/\Delta) / \log \Delta)$ queries are needed, improving our $\Omega(\Delta^2)$ lower bound by a $\log(n/\Delta) / \log \Delta$ factor. They further strengthen this bound to $\Omega(\Delta^2 \log(n / \Delta))$ for non-adaptive randomized algorithms. For deterministic non-adaptive algorithms, they give a $\Omega(\Delta^3 \log n / \log \Delta)$ lower bound, improving over our $\Omega(\Delta^3 / \log^2 \Delta)$ lower bound.

Their improvements for randomized algorithms are obtained by using different and more involved graph constructions. Their lower bound for non-adaptive deterministic algorithms is obtained by establishing a connection between what we refer to as $\Delta$-Query-Schemes (which are equivalent to non-adaptive deterministic algorithms) and cover-free families, which are well-studied mathematical objects. They give an improved lower bound for a cover-free family with certain parameters, which, by the novel connection identified, translates to a lower bound for deterministic non-adaptive algorithms.




\subsection{Outline}
We give our algorithms in Section~\ref{sec:ub} and our lower bounds in Section~\ref{sec:lb}. We conclude with open problems in Section~\ref{sec:conclusion}.

\section{Algorithms}\label{sec:ub}
We give our randomized non-adaptive algorithm that executes $O(\Delta^2 \log n)$ queries in Subsection~\ref{sec:alg-rand}, and our deterministic non-adaptive algorithm that executes $O(\Delta^3 \log(\frac{n}{\Delta}))$ queries in Subsection~\ref{sec:alg-det}. Both these algorithms require the maximum degree $\Delta$ as part of their inputs. In Subsection~\ref{sec:alg-adaptive}, we show how these algorithms can be turned into adaptive algorithms with a similar number of queries that do not require advanced knowledge of $\Delta$.
\subsection{Randomized Algorithm}\label{sec:alg-rand}
Our algorithm, Algorithm~\ref{alg:main}, executes $\Theta(\Delta^2  \log n)$ queries on random subsets $V_i \subseteq V$, where each vertex is included in $V_i$ with probability $\frac{1}{\Delta+1}$, and outputs every pair $uv \in V \times V$ as an edge of the input graph if $\{u, v \}$ is not contained in any maximal independent set $I_i$, for all $i$, returned by the oracle.

\begin{algorithm}
 \begin{algorithmic}
  \REQUIRE Vertex set $V$, maximum degree $\Delta$, large enough constant $C$
  \FOR{$i = 1 \dots C \cdot (\Delta+1)^2 \cdot \log n$}
    \STATE $V_i \subseteq V$ random sample such that every $v \in V$ is included in $V_i$ with probability $\frac{1}{\Delta+1}$
    \STATE $I_i \gets \text{query}(V_i)$
  \ENDFOR
  \STATE $E \gets \{ uv \in V \times V \ : \ \nexists i \text{ such that } \{u, v\} \subseteq I_i \}$
  \RETURN $(V, E)$
 \end{algorithmic}
 \caption{Randomized graph reconstruction using a \textsf{MIS} oracle \label{alg:main}}
\end{algorithm}

In the analysis, we prove that, for every non-edge $uv \in (V \times V) \setminus E(G)$ in the input graph $G$, both $u$ and $v$ are reported in any independent set $I_i$ with probability $\Theta(\frac{1}{\Delta^2})$. Since $\Theta(\Delta^2 \log n)$ independent sets are computed, any non-edge will be detected with high probability, and by the union bound, this then applies to all non-edges.
\begin{theorem}\label{thm:ub}
 Algorithm~\ref{alg:main} is a non-adaptive randomized query algorithm that executes $O(\Delta^2 \log n)$ \textsf{MIS} queries and correctly reconstructs a graph of maximum degree $\Delta$ with high probability. 
\end{theorem}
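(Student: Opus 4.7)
The query-complexity bound is immediate from the loop count in Algorithm~\ref{alg:main}, so the plan is to focus on correctness. I would first observe that the output is a superset of the true edges trivially: if $uv \in E(G)$, then no independent set can contain both $u$ and $v$, so no $I_i$ witnesses the pair, and the algorithm always outputs $uv$. All the work therefore lies in showing that for every non-edge $uv \in \binom{V}{2} \setminus E(G)$, with high probability some iteration $i$ produces an $I_i$ containing both $u$ and $v$.

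For a fixed non-edge $uv$ and a single iteration $i$, I would define the event $\mathcal{E}_{uv}^i$ that $u, v \in V_i$ and $(\Gamma(u) \cup \Gamma(v)) \cap V_i = \varnothing$. Conditioned on $\mathcal{E}_{uv}^i$, both $u$ and $v$ are isolated vertices of $G[V_i]$, so every maximal independent set of $G[V_i]$, and hence the oracle's response $I_i$, must contain both of them. Because the inclusions of distinct vertices in $V_i$ are independent and $|\Gamma(u) \cup \Gamma(v)| \le 2\Delta$ (using that $uv$ is a non-edge, so $u \notin \Gamma(v)$ and $v \notin \Gamma(u)$), the probability of $\mathcal{E}_{uv}^i$ is at least
\[
\frac{1}{(\Delta+1)^2}\left(1 - \frac{1}{\Delta+1}\right)^{2\Delta} \ge \frac{e^{-2}}{(\Delta+1)^2},
\]
where the last step uses the standard inequality $(1-\tfrac{1}{\Delta+1})^{\Delta} \ge e^{-1}$ (valid for $\Delta \ge 1$; the case $\Delta = 0$ is trivial). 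Thus a single iteration witnesses the non-edge $uv$ with probability $p \ge e^{-2}/(\Delta+1)^2 = \Theta(1/\Delta^2)$.

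The iterations are mutually independent, so the probability that no iteration $i \in \{1, \dots, T\}$ with $T = C(\Delta+1)^2 \log n$ witnesses $uv$ is at most
\[
(1-p)^T \le \exp(-pT) \le \exp\!\left(-C e^{-2} \log n\right) = n^{-C e^{-2}}.
\]
Taking $C$ sufficiently large (e.g.\ so that $Ce^{-2} \ge 4$) and applying a union bound over the at most $\binom{n}{2}$ non-edges yields a total failure probability of at most $n^{-2}$, which is $o(1)$. On the complementary event, every non-edge is witnessed by some $I_i$ and therefore excluded from the output, so combined with the trivial inclusion from the first paragraph, the returned edge set equals $E(G)$.

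I do not expect any serious obstacle here: the only mildly delicate step is pinning down a clean constant lower bound on $(1 - \tfrac{1}{\Delta+1})^{2\Delta}$ uniformly in $\Delta$, but this is a one-line computation. The rest is a standard ``probability of success per trial $\cdot$ number of trials $=$ $\Theta(\log n)$'' argument followed by a union bound, which is precisely what the loop count $\Theta(\Delta^2 \log n)$ is designed to afford.
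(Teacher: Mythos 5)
Your proposal is correct and follows essentially the same route as the paper: bound the probability that a fixed non-edge $uv$ is witnessed in a single iteration by the probability that $u,v \in V_i$ while $\Gamma(u)\cup\Gamma(v)$ misses $V_i$, obtain the $\Theta(1/\Delta^2)$ per-iteration bound, and finish with amplification over $\Theta(\Delta^2\log n)$ independent iterations plus a union bound over all non-edges. The only cosmetic differences are that you make the trivial direction (true edges are never discarded) explicit and use the direct $(1-p)^T \le e^{-pT}$ estimate where the paper invokes a Chernoff bound; both are fine.
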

\begin{proof}
 Let $G = (V, E)$ denote the input graph, $\Delta$ the maximum degree, and let $\overline{E} = (V \times V) \setminus E$ denote the set of non-edges. We will prove that, with high probability, every non-edge can be identified by the algorithm in that, for a non-edge $uv \in \overline{E}$, there exists an independent set $I_i$ such that $\{u,v\} \in I_i$ with high probability.
 
 Consider thus any non-edge $uv \in \overline{E}$. Then, for every $i$, the probability that both $u$ and $v$ are reported in independent set $I_i$ is at least:
 \begin{align*}
  \Pr[u,v \in I_i] & \ge \Pr[u,v \in V_i \text{ and } (\Gamma(u) \cup \Gamma(v)) \cap V_i = \varnothing] \ , 
  \end{align*}
  since both $u$ and $v$ need to be included in a maximal independent set if they are isolated vertices in $G[V_i]$.
  
  Next, observe that the events ``$u,v \in V_i$'' and ``$(\Gamma(u) \cup \Gamma(v)) \cap V_i = \varnothing$'' are independent since $u$ and $v$ are not adjacent and are thus not contained in $\Gamma(u) \cup \Gamma(v)$. Furthermore, observe that $|\Gamma(u) \cup \Gamma(v)| \le 2 \Delta$. Then:
  \begin{align}
  \Pr[u,v \in V_i \text{ and } (\Gamma(u) \cup \Gamma(v)) \cap V_i = \varnothing] &= \Pr[u,v \in V_i] \cdot \Pr[(\Gamma(u) \cup \Gamma(v)) \cap V_i = \varnothing] \nonumber \\
  &\ge \frac{1}{(\Delta+1)^2} \cdot (1 - \frac{1}{\Delta+1})^{2 \Delta} \nonumber \\
  &\ge  \frac{1}{(\Delta+1)^2} \cdot e^{- \frac{\frac{1}{\Delta+1}}{1- \frac{1}{\Delta+1}}\cdot 2 \Delta} = \frac{1}{(\Delta+1)^2} \cdot \frac{1}{e^2}  \label{eqn:192} \ , 
 \end{align}
 where we used the bound $1-x \ge e^{-\frac{x}{1-x}}$, which holds for every $x < 1$.
 
Next, we compute the probability that both endpoints of the non-edge $uv$ are never reported together:
\begin{align*}  
 \Pr[\{u,v\} \nsubseteq I_i, \text{ for all $i$}] & = \prod_{i=1}^{C (\Delta+1)^2 \cdot \log n} \Pr[\{u,v\} \nsubseteq I_i] \\
 & = \prod_{i=1}^{C (\Delta+1)^2 \cdot \log n} (1 - \Pr[u,v \in I_i]) \\
&  \le (1 - \frac{1}{(\Delta+1)^2} \cdot \frac{1}{e^2})^{C (\Delta+1)^2 \cdot \log n} \\
& \le \exp \left( - \frac{1}{(\Delta+1)^2} \cdot \frac{1}{e^2} \cdot C(\Delta+1)^2 \cdot \log n \right) 
 \le \frac{1}{n^3} \ ,
\end{align*}
where we used the inequality $1+x \le \exp(x)$, and the last inequality holds when $C$ is chosen to be large enough.


By the union bound, the probability that the endpoints of at least one of the $O(n^2)$ non-edges are not reported in any independent set $I_i$ is therefore at most $\frac{1}{n}$, which completes the proof.
\end{proof}

\subsection{Deterministic Algorithm}\label{sec:alg-det}
Central to our deterministic algorithm is the notion of a {\em $\Delta$-Query-Scheme} and a {\em Witness}:

\begin{definition}[Witness]
 Let $V$ be a set of $n$ vertices and let $2 \le \Delta \le n/2-1$ be an integer. Then, the tuple $(\{u,v \}, \{w_1, \dots, w_{2\Delta} \})$ with $u,v,w_1, \dots, w_{2\Delta} \in V$ being distinct vertices is called a {\em Witness}, and we denote by $\mathcal{W}$ the set of all witnesses.
\end{definition}

As it will be important in the proof of Lemma~\ref{lem:exist-query-scheme}, we give an upper bound on the number of witnesses:

\begin{lemma}\label{lem:size-W}
 The number of witnesses $|\mathcal{W}|$ is bounded by:
 \begin{align*}
  |\mathcal{W}| \le n^2 \cdot (e \cdot \frac{n-2}{2 \Delta})^{2 \Delta} \ .
 \end{align*}
\end{lemma}
\begin{proof}
 We use the bound ${a \choose b} \le \left(e \cdot \frac{a}{b}  \right)^b$ and obtain:
 \begin{align*}
  |\mathcal{W}| & = {n \choose 2} \cdot {n-2 \choose 2 \Delta} 
  \le n^2 \cdot (e \cdot \frac{n-2}{2\Delta})^{2 \Delta} \ .
 \end{align*}

\end{proof}

\begin{definition}[$\Delta$-Query-Scheme]\label{def:query-scheme}
 Let $V$ be a set of $n$ vertices and let $2 \le \Delta \le n/2-1$ be an integer. The set $\mathcal{Q} = \{Q_1, \dots, Q_{\ell} \}$ is denoted a {\em $\Delta$-Query-Scheme of size $\ell$} if, for every witness $(\{u,v \}, \{w_1, \dots, w_{2\Delta} \}) \in \mathcal{W}$, there exists a query $Q_i \in \mathcal{Q}$ such that:
 
 \begin{enumerate}
  \item $u,v \in Q_i$, and
  \item $\{ w_1, \dots, w_{2\Delta} \} \cap Q_i = \varnothing$.
 \end{enumerate}
 
 In the following, we say that a query $Q_i$ {\em considers} a witness $W \in \mathcal{W}$ if Items~1 and~2 hold for $Q_i$ and $W$.
\end{definition}

We show in Lemma~\ref{lem:query-scheme-alg} that a $\Delta$-Query-Scheme of size $\ell$ immediately yields a non-adaptive deterministic query algorithm for \textsf{GR} for graphs of maximum degree $\Delta$ that executes $\ell$ queries. Our task is thus to design a $\Delta$-Query-Scheme of small size, which we do in the proof of Lemma~\ref{lem:exist-query-scheme}.

\begin{lemma}\label{lem:query-scheme-alg}
 Let $\mathcal{Q}$ be a $\Delta$-Query-Scheme of size $\ell$. Then, there exists a non-adaptive deterministic algorithm for \textsf{GR} that executes $\ell$ queries on graphs of maximum degree $\Delta$.
\end{lemma}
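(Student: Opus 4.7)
The plan is to convert the query scheme into an algorithm in the most natural way: query the oracle on each $Q_i \in \mathcal{Q}$ to obtain maximal independent sets $I_1, \dots, I_\ell$, and then output the edge set
\[
E' = \{uv \in V \times V \,:\, \nexists\, i \text{ such that } \{u,v\} \subseteq I_i\}.
\]
Non-adaptivity and the query count $\ell$ are immediate from this description, so the work lies entirely in verifying correctness, i.e.\ that $E' = E(G)$.

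One inclusion is trivial: if $uv \in E(G)$, then $u$ and $v$ can never both belong to the same independent set, so no $I_i$ can contain both and hence $uv \in E'$. For the reverse inclusion, I would fix a non-edge $uv \in (V\times V)\setminus E(G)$ and exhibit an index $i$ with $\{u,v\} \subseteq I_i$. Set $N = (\Gamma(u) \cup \Gamma(v)) \setminus \{u,v\}$ and note $|N| \le 2\Delta$. The key move is to pad $N$ with arbitrary distinct vertices of $V \setminus (\{u,v\} \cup N)$ to obtain a set $W = \{w_1,\dots,w_{2\Delta}\}$ of exactly $2\Delta$ vertices. Applying the defining property of the $\Delta$-Query-Scheme to the tuple $(u,v,w_1,\dots,w_{2\Delta})$ yields some $Q_i$ with $u,v \in Q_i$ and $W \cap Q_i = \varnothing$. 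Since $N \subseteq W$, the vertices $u$ and $v$ are isolated in $G[Q_i]$, and therefore must appear in \emph{any} maximal independent set of $G[Q_i]$, in particular in the set $I_i$ returned by the oracle.

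The only delicate point is the padding step, which implicitly requires $n \ge 2\Delta + 2$ in order to be able to choose the padding set $W$; when this fails, the $\Delta$-Query-Scheme definition is vacuous, and the lemma would either be stated under this mild assumption or, equivalently, one could augment $\mathcal{Q}$ with all $\binom{n}{2} = O(\Delta^2)$ pair-queries to handle the small-$n$ case without affecting the asymptotic query count in the intended application. I do not anticipate any other obstacle: once padding is handled, correctness reduces to the one-line observation that an \textsf{MIS} oracle is forced to report every isolated vertex of the queried induced subgraph.
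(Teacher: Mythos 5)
Your proof is correct and follows essentially the same approach as the paper: query every $Q_i$, output the complement of the pairs ever seen together in a reported MIS, and for each non-edge $uv$ invoke the $\Delta$-Query-Scheme property on $u$, $v$, and (a padding of) $\Gamma(u)\cup\Gamma(v)$ to find an index $i$ where $u$ and $v$ are isolated in $G[Q_i]$ and hence forced into $I_i$. You are in fact slightly more careful than the paper in two small respects — explicitly padding $\Gamma(u)\cup\Gamma(v)$ up to exactly $2\Delta$ vertices so that the definition applies verbatim, and flagging the degenerate case $n<2\Delta+2$ — but these are presentation refinements, not a different argument.
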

\begin{proof}
 Let $G=(V, E)$ be the input graph, and let $\mathcal{Q}$ be a $\Delta$-Query-Scheme of size $\ell$. The algorithm executes every query $Q_i \in \mathcal{Q}$. Let $I_i$ denote the query answer to $Q_i$. 
 
 We now claim that, for every non-edge $uv \in (V \times V) \setminus E$ in the input graph, there exists an independent set $I_i$ such that $u,v \in I_i$. To see this, denote by $w_1, \dots, w_{\deg(u)}$ the neighbours of $u$ in $G$ and by $w_{\Delta+1}, \dots, w_{\Delta+ \deg(v)}$ the neighbours of $v$ in $G$. Then, since $\mathcal{Q}$ is a $\Delta$-Query-Scheme, there exists a query $Q_i$ such that $u,v \in I_i$, but none of $u$'s and $v$'s neighbours are included. Hence, both $u$ and $v$ are necessarily included in $I_i$ and the algorithm therefore observes a witness that proves that the edge $uv$ does not exist in the input graph. 
 
 Since the argument applies to every non-edge, the algorithm learns all non-edges of the input graph and thus also learns all of the input graph's edges by complementing the set of non-edges.
\end{proof}

\begin{lemma}\label{lem:exist-query-scheme}
 There exists a $\Delta$-Query-Scheme of size $O(\Delta^3 \log \frac{n}{\Delta})$.
\end{lemma}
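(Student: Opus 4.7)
The plan is to use the probabilistic method, as hinted at in the techniques section. I would sample $\ell = C \Delta^3 \log n$ subsets $Q_1, \dots, Q_\ell$ independently, where each $Q_i$ is constructed by including every vertex in $V$ independently with probability $\frac{1}{\Delta+1}$, for a sufficiently large constant $C$. The goal is to show that with positive probability, this random collection is a valid $\Delta$-Query-Scheme, which establishes existence.

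First, I would fix an arbitrary $(2\Delta+2)$-tuple of distinct vertices $(u, v, w_1, \dots, w_{2\Delta})$ and analyze, for a single index $i$, the probability that $Q_i$ witnesses this tuple, i.e., that $u, v \in Q_i$ and $\{w_1, \dots, w_{2\Delta}\} \cap Q_i = \varnothing$. Since all $2\Delta+2$ vertices are distinct, these events are independent over vertices, and the probability equals exactly $\frac{1}{(\Delta+1)^2} \cdot \bigl(1 - \frac{1}{\Delta+1}\bigr)^{2\Delta}$. Using the same inequality $1-x \geq e^{-x/(1-x)}$ employed in the proof of Theorem~\ref{thm:ub}, this is at least $\frac{1}{e^2 (\Delta+1)^2}$, which is $\Theta(1/\Delta^2)$.

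Second, I would bound the probability that \emph{no} $Q_i$ witnesses the fixed tuple. Since the $Q_i$ are independent, this is at most $\bigl(1 - \frac{1}{e^2(\Delta+1)^2}\bigr)^{\ell} \leq \exp\bigl(-\ell / (e^2(\Delta+1)^2)\bigr)$. Choosing $\ell = C \Delta^3 \log n$ makes this bound at most $n^{-c \Delta}$ for a constant $c$ that grows with $C$. Finally I would apply a union bound over all ordered $(2\Delta+2)$-tuples of distinct vertices, of which there are at most $n^{2\Delta+2}$. For $C$ large enough, the total failure probability is strictly less than $1$, so a realization of the random process that is simultaneously a witness for every tuple exists; this realization is the desired $\Delta$-Query-Scheme.

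I do not expect any serious obstacle. The only mild subtlety is checking that $C$ can be chosen to beat the $n^{2\Delta+2} = e^{O(\Delta \log n)}$ factor in the union bound, but this is precisely why the $\Delta^3 \log n$ dependency appears: the $\Delta^2$ factor cancels the single-tuple failure probability and one extra $\Delta$ factor accommodates the $\Delta \log n$ bits needed to index a tuple. Edge cases such as $2\Delta + 2 > n$ can be handled trivially by noting that the $\Delta$-Query-Scheme condition is vacuous (or, equivalently, that $\Delta \geq n/2$ implies the algorithm may as well query every vertex individually).
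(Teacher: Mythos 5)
Your proposal is correct and matches the paper's proof essentially verbatim: the same random construction with inclusion probability $\frac{1}{\Delta+1}$, the same per-tuple bound of $\frac{1}{e^2(\Delta+1)^2}$ via $1-x \ge e^{-x/(1-x)}$, and the same union bound over $n^{2\Delta+2}$ tuples yielding $\ell = \Theta(\Delta^3 \log n)$. No issues.
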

\begin{proof}
 For an integer $\ell$ whose value we will determine later, let $\mathcal{Q} = \{Q_1, Q_2, \dots, Q_{\ell} \}$ be such that, for every $i$, $Q_i \subseteq V$ is the subset of $V$ obtained by including every vertex with probability $\frac{1}{\Delta+1}$. We use the probabilistic method and prove that $\mathcal{Q}$ is a $\Delta$-Query-Scheme with positive probability, which in turn implies that such a scheme exists.
 
 To this end, let $u, v, w_1 \dots, w_{2\Delta} \in V$ be distinct vertices.  Then, for any $i$, we obtain (the derivation is identical to Inequality~\ref{eqn:192} and therefore not repeated here)
 \begin{align*}
  \Pr[u,v \in Q_i \mbox{ and } w_1, \dots, w_{2\Delta} \notin Q_i] = \frac{1}{(\Delta+1)^2} \cdot (1 - \frac{1}{\Delta+1})^{2 \Delta} \ge \frac{1}{(\Delta+1)^2} \cdot \frac{1}{e^2} \ .
 \end{align*}

Furthermore, the probability that there does not exist a query $Q_i$, for any $i \in [\ell]$, such that $u,v \in Q_i$ and $w_1, \dots, w_{2\Delta} \notin Q_i$ is at most:
\begin{align*}
 \Pr[\nexists i \mbox{ such that } u,v \in Q_i \mbox{ and } w_1, \dots, w_{2\Delta} \notin Q_i]  & \le \left( 1-\frac{1}{(\Delta+1)^2 e^2} \right)^{\ell} \\
 & \le \exp(- \frac{\ell}{(\Delta+1)^2 e^2}) \ .
\end{align*}
Hence, by the union bound over all witnesses $u,v,w_1, \dots, w_{2\Delta}$ (see Lemma~\ref{lem:size-W}), the probability that there exists a witness that is not considered by the $\Delta$-Query-Scheme is at most:
$$n^2 \cdot (e\cdot \frac{n-2}{2 \Delta})^{2 \Delta} \cdot \exp(- \frac{\ell}{(\Delta+1)^2 e^2}) \ . $$
For this probability to be strictly below $1$, it is enough to set
$$\ell  = \Theta( \Delta^3 \cdot \log(\frac{n}{\Delta})) \ , $$
which in turn implies that such a scheme exists.
\end{proof}

Combining Lemma~\ref{lem:query-scheme-alg} and Lemma~\ref{lem:exist-query-scheme}, we obtain the main result of this section.

\begin{corollary} \label{cor:ub-2}
 There exists a deterministic algorithm for \textsf{GR} that executes $O(\Delta^3 \log \frac{n}{\Delta})$ non-adaptive  \textsf{MIS} queries for graphs with maximum degree $\Delta$. 
\end{corollary}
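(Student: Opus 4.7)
The plan is to chain the two preceding lemmas. First, I would invoke Lemma~\ref{lem:exist-query-scheme} to produce a $\Delta$-Query-Scheme $\mathcal{Q}$ of size $\ell = O(\Delta^3 \log n)$. Then I would feed $\mathcal{Q}$ into Lemma~\ref{lem:query-scheme-alg}, which turns any such scheme into a non-adaptive deterministic \textsf{GR} algorithm that performs exactly $\ell$ queries on graphs of maximum degree $\Delta$. Composing the two bounds directly yields the claimed $O(\Delta^3 \log n)$ query complexity.

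Before writing this up, I would double-check that the two attributes of the desired algorithm are preserved by the composition. Non-adaptivity is immediate: the scheme $\mathcal{Q}$ depends only on $n$ and $\Delta$, so the queries $Q_1, \dots, Q_\ell$ can be fixed in advance and submitted in parallel, without reference to any oracle answer. Determinism is similarly clear, since the algorithm of Lemma~\ref{lem:query-scheme-alg} simply queries every set in $\mathcal{Q}$ and then outputs the complement of the set of pairs $\{u,v\}$ witnessed together in some returned independent set $I_i$. Maximum-degree dependence is correctly tracked because both lemmas are parameterised by the same $\Delta$.

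There is no real obstacle here, as the two lemmas carry out all of the work; the only point worth flagging is that Lemma~\ref{lem:exist-query-scheme} is established via the probabilistic method, so Corollary~\ref{cor:ub-2} is an existence statement about a deterministic query algorithm rather than a recipe for constructing the scheme $\mathcal{Q}$ in polynomial time. Since the paper's complexity measure is the number of \textsf{MIS} queries and not computation time, this is harmless, and I would note it as a remark rather than attempt an explicit derandomisation.
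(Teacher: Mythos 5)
Your proposal is exactly the paper's argument: combine Lemma~\ref{lem:exist-query-scheme} (existence of a $\Delta$-Query-Scheme of size $O(\Delta^3 \log n)$) with Lemma~\ref{lem:query-scheme-alg} (any such scheme yields a non-adaptive deterministic \textsf{GR} algorithm). Your added remarks on non-adaptivity and the existence-versus-construction caveat are correct and consistent with the paper's framing.
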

 
 \subsection{Adaptive Algorithms without Knowledge of $\Delta$} \label{sec:alg-adaptive}
 We will now show how the randomized and deterministic algorithms from the previous sections can be turned into adaptive algorithms with similar query complexity that do not require knowledge of $\Delta$ in order to operate. More specifically, let $\mathcal{A}$ be a non-adaptive query algorithm that, given an integer $D$, identifies with high probability in $R(D)$ rounds all non-edges $uv$ such that $\deg(u) \le D$ and $\deg(v) \le D$ hold. It immediately follows from our analyses that both our randomized and deterministic query algorithms have this property when executed with parameter $D$ instead of the true value of $\Delta$. For our randomized algorithm, we have $R(D) = O(D^2 \log n)$, and, for our deterministic algorithm, we have $R(D) = O(D^3 \log(\frac{n}{D})$. We will show how $\mathcal{A}$ can be turned into an adaptive algorithm that does not require knowledge of $\Delta$ and requires overall $O(\mathcal{R}(\Delta))$ rounds.
 
 This is achieved via the doubling strategy displayed in Algorithm~\ref{alg:adaptive-scheme}.

 \begin{algorithm}
 \begin{algorithmic}
  \REQUIRE Vertex set $V$, non-adaptive algorithm $\mathcal{A}$ that identifies all non-edges whose endpoints are of degree at most $D$, for some integer $D$
  \STATE $D \gets 1$, $E' \gets V \times V$
  \WHILE{$\Delta(E') > D$}   
    \STATE $D \gets 2 \cdot D$
    \STATE $F \gets \mathcal{A}(D)$ \COMMENT{Set of non-edges identified}
    \STATE $E' \gets (V \times V) \setminus F$
  \ENDWHILE
  \RETURN $(V, E')$
 \end{algorithmic}
 \caption{Adaptive algorithm that does not require advanced knowledge of $\Delta$\label{alg:adaptive-scheme}}
\end{algorithm}

The algorithm uses the notation $\Delta(E')$, which is to be interpreted as the maximum degree in the graph spanned by the edges $E'$.

 \begin{lemma}
  Let $\mathcal{A}$ be a non-adaptive \textsf{MIS}-query algorithm that, given an integer $D$, in $R(D)$ rounds identifies with high probability all non-edges $uv \in (V \times V) \setminus E$ of the input graph $G=(V, E)$ that have the property that $\deg(u) \le D$ and $\deg(v) \le D$. We assume that $R(D)$ is at least linear in $D$, i.e., $R(D) = \Omega(D)$. Then, there exists an adaptive algorithm for $\textsf{GR}$ that succeeds with high probability, does not require advanced knowledge of $\Delta$, runs in $O(R(\Delta))$ rounds, and requires $O(\log \Delta)$ rounds of adaptivity.
 \end{lemma}

 \begin{proof}
 Let $G=(V, E)$ denote the input graph and $\Delta$ the maximum degree. 
 
 We will argue that when exiting the last iteration of the while loop of the algorithm, the inequalities
 \begin{align}
\Delta \le D < 2 \Delta  \label{eqn:293}
 \end{align}
hold. The lower bound $\Delta \le D$ establishes correctness since the last run of $\mathcal{A}$ is executed with a guess $\Delta \le D$, which implies that all non-edges are correctly identified. The upper bound is required in order to bound the query complexity of the algorithm. 
 
 We first argue the lower bound in Inequality~\ref{eqn:293}. Observe that, in any iteration of the loop, $E' \supseteq E$ holds since $F$ is a subset of the non-edges and $E' = (V \times V) \setminus F$. Thus, we have $\Delta(E') \ge \Delta(E) = \Delta$. When exiting the last iteration of the algorithm, we have $\Delta(E') \le D$ since otherwise this would not be the last iteration of the algorithm. Combining these two inequalities yields $\Delta \le D$. 
 
 Regarding the upper bound, we have just proved that, when exiting the last iteration, $\Delta \le D$ holds. It is then also clear that a run with $\Delta \le D < 2 \Delta$ is executed since the guess $D$ is doubled in each iteration. We claim that this run with $\Delta \le D < 2 \Delta$ is indeed the last iteration of the algorithm. Indeed, in this iteration, all non-edges $F$ are correctly identified since $D \ge \Delta$, which implies that $E'$  constitutes the edge set $E$ of the input graph. This further implies that $\Delta(E') = \Delta$. The condition in the while loop then ensures that this is indeed the last iteration of the algorithm.  
 
Last, regarding the runtime of the algorithm, let $D' = 2^i$ be the guess used in the last iteration. We then have that $\Delta \le D' < 2\Delta$. Then, under the assumption that $R(D) = \Omega(D)$, we have: 
\begin{align*}
R(2) + R(4) + \dots + R(D') \le 2 \cdot R(D') \ ,
\end{align*}
which establishes the query complexity of the algorithm. Last, since the while loop is executed $O(\log \Delta)$ times, the algorithm requires only $O(\log \Delta)$ rounds of adaptivity. 
 \end{proof}

 The previous lemma together with our randomized and determinstic algorithms from the previous sections yield the following corollary.
 
\begin{corollary}\label{cor:adaptive-algorithms}
 There are randomized and deterministic adaptive \textsf{MIS}-query algorithms for \textsf{GR} that execute $O(\Delta^2 \cdot \log n)$ and $O(\Delta^3 \cdot \log \frac{n}{\Delta})$ queries, respectively, require $O(\log \Delta)$ rounds of adaptivity, and do not require advanced knowledge of $\Delta$ in order to operate. The randomized algorithm succeeds with high probability. 
\end{corollary}

\section{Lower Bounds}\label{sec:lb}
In this section, we give three lower bound results. First, in Subsection~\ref{sec:lb-det}, we consider the class of non-adaptive deterministic query algorithms and we prove that such algorithms require $\Omega(\Delta^3 / \log^2 \Delta)$ queries. This result renders our deterministic query algorithm optimal, up to poly-logarithmic factors, and it also establishes a separation result between deterministic and randomized query algorithms, since, as demonstrated by our randomized query algorithm, $O(\Delta^2 \log n)$ non-adaptive randomized queries are sufficient.

Next, in Subsection~\ref{sec:lb-rand}, we show that $\Omega(\Delta^2)$ queries are needed for query algorithms that may be adaptive and randomized, and that $\Omega(\log n)$ queries are needed for such algorithms, even if the input graph is an $n$-vertex cycle.
\subsection{Lower Bound for Non-adaptive Deterministic Algorithms} \label{sec:lb-det}
We will first show in Lemma~\ref{lem:lb-query-scheme} that any $\Delta$-Query-Scheme must be of size at least $\Omega(\frac{\Delta^3}{\log^2 \Delta})$. Then, we argue in Lemma~\ref{lem:qs} that the queries executed by any non-adaptive deterministic query algorithm must constitute a $(\Delta-1)$-Query-Scheme. These two lemmas together then imply our main result of this section as stated in Corollary~\ref{cor:det-lb}, i.e., that non-adaptive deterministic query algorithms require $\Omega(\frac{\Delta^3}{\log^2 \Delta})$ queries.

\begin{lemma}\label{lem:lb-query-scheme}
 For every $\Delta \le n^{2/3}$, every $\Delta$-Query-Scheme is of size $\Omega(\frac{\Delta^3}{\log^2(\Delta)})$.
\end{lemma}
\begin{proof}
 Let $C$ be a suitably large constant, and suppose that there exists a $\Delta$-Query-Scheme $\mathcal{Q}$ of size $\ell = \frac{1}{6 \cdot C^2} \cdot \frac{\Delta^3}{\ln^2(\Delta)}$. We will show by contradiction that a $\Delta$-Query-Scheme of this size does not exist.
 Furthermore, we define a relevant query size threshold $t$ by $t := C \cdot \frac{n \ln(\Delta)}{\Delta}$. 
 
 Given $\mathcal{Q}$, we will first argue that there exists a pair of disjoint vertices $x,y \in V$ such that there is no query $Q \in \mathcal{Q}$ with $Q = \{x,y\}$, and there are at most $\Delta / 2$ queries of size at most $t$ that contain both $x,y$. Then, once we have established that such a pair $\{x,y\}$ exists then we use the probabilistic method to show that there exist $2\Delta$ disjoint vertices $w_1, \dots, w_{2\Delta}$ different to $x,y$ such that the witness $(\{x,y\}, \{w_1, \dots, w_{2\Delta} \}$ is not considered in $\mathcal{Q}$, a contradiction to the fact that $\mathcal{Q}$ is a $\Delta$-Query-Scheme. This implies that a query scheme of size $\ell$ cannot exist.
 
 Denote by $\mathcal{X}$ the set of subsets of $V$ of size $2$, i.e., $\mathcal{X} = \{ \{u, v\} \ : \ u,v \in V, u \neq v\}$, where $V$ denotes the set of $n$ vertices of the input graph, and observe that $|\mathcal{X}| = {n \choose 2}$.
 
 First, observe that at most $\ell$ queries in $\mathcal{Q}$ are of size exactly $2$. Hence, as long as $\ell \le {n \choose 2} / 10$, at most a $(1/{10})$-fraction of $\mathcal{X}$ is part of queries of size $2$. Observe that, since the statement of the lemma assumes that $\Delta \le n^{2/3}$, we have 
  \begin{align*}
\ell = \frac{1}{6 \cdot C^2} \cdot \frac{\Delta^3}{\ln^2(\Delta)} \le \frac{\Delta^3}{6 \cdot C^2} \le \frac{n^2}{6 \cdot C^2} \le \frac{n^2}{40} \le \frac{1}{10} {n \choose 2} \ ,    
  \end{align*}
  where we assumed that $C$ is large enough, and the last inequality holds for every $n \ge 2$. Observe that a query $Q = \{u,v\}$ of size $2$ immediately considers all witnesses of the form $(\{u,v\}, \{w_1, \dots, w_{2\Delta} \})$, for any vertices $(w_i)_{1 \le i \le 2 \Delta}$. The previous argument shows that this can happen for at most a $(1/10)$-fraction of pairs $\{u,v \} \in \mathcal{X}$. 
 
 Next, we argue that at least a $(1/4)$-fraction of the pairs $\{u,v\} \in \mathcal{X}$ are such that $\{u, v\}$ is contained in at most $\frac{1}{2} \Delta$ queries of size at most $t$ in $\mathcal{Q}$.
 To this end, observe that any query of size at most $t$ contains at most ${t \choose 2} \le t^2$ distinct pairs $\{u,v\}$, and since there are overall $\ell$ queries, at most 
 $$\ell \cdot t^2 = \frac{1}{6 C^2} \cdot \frac{\Delta^3}{\ln^2(\Delta)} \cdot \left(C \cdot \frac{n \ln(\Delta)}{\Delta} \right)^2 = \frac{1}{6} \cdot n^2 \cdot \Delta$$
 pairs appear overall in all queries of size at most $t$.
 If less than a $(1/4)$-fraction of pairs $\{u,v\}$ were contained in at most $\frac{1}{2} \Delta$ queries of size at most $t$ in $\mathcal{Q}$, then at least a $(3/4)$-fraction of pairs were contained in more than $\frac{1}{2} \Delta$ queries of size at most $t$. But this implies that the following inequality must hold:
 $$(3/4) \cdot {n \choose 2} \frac{1}{2} \Delta \le \ell \cdot t^2 = \frac{1}{6}\cdot  n^2 \cdot \Delta \ . $$
 This inequality, however, does not hold for $n \ge 10$, a contradiction. Hence, we have proved that at least a $(1/4)$-fraction of pairs $\{u,v\} \in \mathcal{X}$ are such that $\{u, v \}$ is contained in at most $\frac{1}{2} \Delta$ queries of size at most $t$ in $\mathcal{Q}$.
 
 Consider thus a pair $\{u,v\} \in \mathcal{X}$ that is included in at most $\Delta/2$ queries of size at most $t$, and that is not included in a query of size $2$. The arguments above ensure that such a pair exists. Denote by $Q_1, \dots, Q_k$ the set of queries that contain both $u$ and $v$, and suppose that $Q_1, \dots, Q_j$ are the queries of size at most $t$ (which implies $j \le \Delta/2$). We now claim that there is a witness $W = (\{u,v\}, \{w_1, \dots, w_{2\Delta} \})$, for some vertices $(w_i)_{1 \le i \le 2\Delta}$, that is not considered by the queries $Q_1, \dots, Q_k$, contradicting the assumption that $\mathcal{Q}$ is a $\Delta$-Query-Scheme, which then completes the proof.
 
 The witness is constructed as follows. For $1 \le i \le j$, let $\tilde{w}_i \in Q_i \setminus \{u, v\}$ be any element. Recall that $\{u,v\}$ is not contained in any query of size exactly $2$, hence, $\tilde{w}_i$ is well-defined. Furthermore, let $\tilde{W} = \{ \tilde{w}_1, \tilde{w}_2, \dots, \tilde{w}_j \}$. We observe that the elements $(\tilde{w}_i)_{1 \le i \le j}$ are not necessarily disjoint, and, hence, $|\tilde{W}| \in  \{1, \dots, j\}$. 
 Next, let $R \subseteq V \setminus \left( \{u,v\} \cup \tilde{W} \right)$ be a random subset of size $2\Delta - |\tilde{W}|$. Then, our witness $W$ is obtained as $W = \{ \{u,v\}, \tilde{W} \cup R \}$, and we will prove in the following that, with positive probability, this witness is indeed not considered by $\mathcal{Q}$. This implies that there exists a witness that is not considered by $\mathcal{Q}$, which completes the proof. 
 
 
 Consider now any query $Q_i$, for $i \ge j+1$. Then, the probability that $Q_i$ considers $W$ is bounded as follows:
\begin{align*}
\Pr[Q_i & \mbox{ considers } W] \le \Pr[|Q_i \cap R| = \varnothing] \\ 
& = \frac{n - |Q_i| - |\tilde{W}|}{n-2-|\tilde{W}|} \cdot \frac{n - |Q_i| - |\tilde{W}| - 1}{n-2-|\tilde{W}| -1} \cdot \ldots \cdot \frac{n - |Q_i| - |\tilde{W}| - (2\Delta - |\tilde{W}|) + 1}{n-2-|\tilde{W}| - (2\Delta - |\tilde{W}|) + 1} \\
& \le \left( \frac{n - |Q_i| - |\tilde{W}|}{n-2-|\tilde{W}|} \right)^{2\Delta - |\tilde{W}|} \le \left( \frac{n - |Q_i|}{n-2} \right)^{2\Delta - |\tilde{W}|} \le \left( \frac{n - |Q_i|}{n-2} \right)^{1.5\Delta} \\
& = \left( \frac{n - 2 + 2 - |Q_i|}{n-2} \right)^{1.5\Delta} = \left( 1 - \frac{|Q_i|-2}{n-2} \right)^{1.5\Delta} 
\le \left( 1 - \frac{|Q_i|-2}{n} \right)^{1.5\Delta} \\
& \le \exp \left(- \frac{(|Q_i| -2)1.5 \Delta}{n} \right) 
 \le \exp \left( - 1.5 C \ln(\Delta) + \frac{3 \Delta}{n}\right) \le \frac{2}{\Delta^{1.5 C}} \ .
\end{align*}
 
 Since $C$ is a large enough constant, by the union bound, the probability that at least one query of the queries $Q_{j+1} \dots, Q_k$ considers $W$ is therefore strictly below $1$ (recall that there are at most $k \le \ell \le \Delta^3$ queries, and $\Delta^3 \cdot \frac{2}{\Delta^{1.5 C}} < 1$ holds, for large enough $C$). Hence, there exists a witness that is not considered by these queries, which completes the proof.
\end{proof}

\begin{lemma}\label{lem:qs}
 Let $\mathcal{A}$ be a non-adaptive deterministic query algorithm for \textsf{GR} on graphs of maximum degree $\Delta$. Then, the queries executed by $\mathcal{A}$ form a $(\Delta-1)$-Query-Scheme.
\end{lemma}
\begin{proof}
 For the sake of a contradiction, suppose that there exists a sequence of queries $Q_1, \dots, Q_k$ that does not form a $(\Delta-1)$-Query-Scheme but still allows the algorithm $\mathcal{A}$ to learn the input graph exactly. Since the queries do not form a $(\Delta-1)$-Query-Scheme, there exists a witness $W = (\{u,v\}, \{w_1, \dots, w_{2\Delta-2} \})$ that is not considered by the queries. Consider now any two input graphs $G_1$ and $G_2$ of maximum degree $\Delta$ that have the following properties: $u$'s neighbours are $w_1, \dots, w_{\Delta-1}$, and $v$'s neighbours are $w_{\Delta}, \dots, w_{2\Delta-2}$. In $G_1$, there is also an edge between $u$ and $v$, and in $G_2$ there is no edge between $u$ and $v$.  Observe that the degrees of $u$ and $v$ in $G_1$ and $G_2$ are $\Delta$ and $\Delta-1$, respectively. The maximum degrees in $G_1$ and $G_2$ are therefore at most $\Delta$.
 
 Now, we claim that, for every query $Q_i$, the oracle can respond with an independent set $I_i$ that does not include both vertices $u$ and $v$. Observe that the algorithm therefore cannot learn whether the edge $uv$ exists since both $G_1$ and $G_2$ are consistent with all query answers. The algorithm therefore cannot distinguish between the two graphs $G_1, G_2$, which then completes the argument.
 
 Since $Q_i$ does not consider $W$, there exists a vertex $w \in \{w_1, \dots, w_{2\Delta-2} \}$ such that $w \in Q_i$. Hence, the oracle can construct an independent set starting with vertex $w$ (e.g., by running the Greedy maximal independent set algorithm where $w$ is the first vertex picked), which implies that either $u$ or $v$ cannot be included in the independent set. This completes the proof.
 \end{proof}
 \textit{Remark:} The proof of the previous lemma assumes that the oracle can identify a witness not considered by the queries submitted by the algorithm. This is only possible if all queries are submitted simultaneously to the oracle. Observe that this is a valid assumption since we consider the class of non-adaptive algorithms, and such algorithms equally work when all queries are submitted simultaneously.

Combining Lemma~\ref{lem:lb-query-scheme} with Lemma~\ref{lem:qs}, we obtain the main lower bound result of this section as a corollary:
 \begin{corollary}\label{cor:det-lb}
  Every deterministic non-adaptive query algorithm for \textsf{GR} requires $\Omega(\Delta^3 / \log^2(\Delta))$ queries.
 \end{corollary}

\subsection{Lower Bounds for Adaptive Randomized Algorithms} \label{sec:lb-rand}

We first prove that, even on an $n$-vertex cycle, any randomized adaptive algorithm requires $\Omega(\log n)$ queries to solve \textsf{GR}. Our proof is based on an indistinguishability argument: At least $\Omega(\log n)$ queries are needed so that, for each pair of vertices $u,v \in V$, different outcomes from the oracle are observed, which is needed to distinguish all possible cycles from each other.
\begin{theorem}\label{thm:lb-2}
 Every possibly randomized query algorithm with success probability strictly above $1/2$ for \textsf{Graph Reconstruction} using a \textsf{Maximal Independent Set} oracle on an $n$-vertex cycle requires $\Omega(\log n)$ queries. 
\end{theorem}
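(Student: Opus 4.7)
The plan is to apply Yao's minimax principle: I take $\mathcal{D}$ to be the uniform distribution over all $(n-1)!/2$ labeled $n$-cycles on $V$ (we may assume $n \geq 5$, since the claim is trivial for constant $n$) and show that no deterministic algorithm $A$ making $k < \log_3 n$ queries can succeed on more than half of $\mathcal{D}$ against a worst-case adversarial oracle. By the standard Yao averaging argument, this rules out any randomized algorithm with worst-case success probability exceeding $1/2$ using $o(\log n)$ queries.

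For a transcript $\mathcal{T} = ((Q_1, I_1), \dots, (Q_k, I_k))$ I assign each vertex $v \in V$ a \emph{type} $\tau(v) \in \{0, 1, 2\}^k$ whose $i$-th coordinate records whether $v \notin Q_i$, $v \in I_i$, or $v \in Q_i \setminus I_i$; this partitions $V$ into at most $3^k$ type classes, matching the three-way partition described informally in the introduction. The key structural lemma I would prove is a type-preservation property: for any permutation $\sigma$ of $V$ that permutes within each type class, if $\mathcal{T}$ is a valid transcript for a cycle $C^*$ (meaning each $I_i$ is an MIS of $C^*[Q_i]$), then $\mathcal{T}$ is also a valid transcript for $\sigma(C^*)$. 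This follows because $\sigma$ fixes every $Q_i$ and $I_i$ setwise, so independence and maximality of $I_i$ in $\sigma(C^*)[Q_i]$ reduce to the same properties in $C^*[Q_i]$.

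Call two cycles $C^*_1, C^*_2$ \emph{equivalent} if some single transcript is valid for both (equivalently, there are adversary strategies producing the same transcript on both inputs). For any two equivalent but distinct cycles, $A$ must produce the same output under the respective adversaries, so it succeeds on at most one of them. Hence the success count of $A$ over $\mathcal{D}$ is bounded by the number of equivalence classes.

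To bound the number of classes from above, fix any $C^* \in \mathcal{D}$ and any valid transcript $\mathcal{T}$ for $C^*$. If $k < \log_3 n$, then $3^k < n$ and the pigeonhole principle gives two distinct vertices $u \neq v$ sharing a type, making $\sigma = (u\,v)$ a nontrivial type-preserving permutation. By the lemma, $\sigma(C^*)$ is also consistent with $\mathcal{T}$ and is therefore equivalent to $C^*$. A short case analysis of the dihedral group $\text{Aut}(C_n)$ shows that for $n \geq 5$ no transposition is an automorphism of an $n$-cycle (rotations move every vertex, and each reflection decomposes into at least two transpositions), so $\sigma(C^*) \neq C^*$. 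Thus every equivalence class has size at least $2$, the success probability is at most $1/2$, and the required contradiction gives $k \geq \log_3 n = \Omega(\log n)$. The hardest part is formulating the equivalence and the type-preservation lemma correctly in the presence of adaptive queries and an input-dependent adversary; the trick is to work with the set of \emph{all} transcripts that are valid for a given input (over every choice of adversary) rather than to fix one adversary up front, which sidesteps the risk that a naively adversarial oracle might return different MIS answers on $C^*$ and $\sigma(C^*)$ and so break the symmetry underpinning the pairing argument.
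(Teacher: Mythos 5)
Your core approach matches the paper's: you partition the vertices into at most $3^k$ type classes based on (not queried / queried and in $I_i$ / queried and not in $I_i$), invoke the type-preservation property for permutations fixing every $Q_i$ and $I_i$ setwise, and use the pigeonhole bound $3^k < n$ to produce a transposition $\sigma$ of two same-type vertices with $\sigma(C^*) \neq C^*$. The paper organizes this as a ternary tree with one leaf per type, but that is the same partition.

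Where your write-up has a genuine gap is in the counting step. The relation ``some single transcript is valid for both $C_1$ and $C_2$'' is not transitive, so it does not define equivalence classes: $T_1$ valid for $\{C_1, C_2\}$ and $T_2$ valid for $\{C_2, C_3\}$ gives no transcript valid for both $C_1$ and $C_3$. Even if you pass to the transitive closure, the inference ``$A$ succeeds on at most one cycle per class, every class has size $\geq 2$, hence success probability $\leq 1/2$'' fails: what you actually know is that for each reachable transcript $T$, the set $\mathrm{Cons}(T) = \{C : T \text{ valid for } C\}$ satisfies $|\mathrm{Cons}(T)|\geq 2$ and $|\mathrm{Success}(A) \cap \mathrm{Cons}(T)| \leq 1$, but the sets $\mathrm{Cons}(T)$ overlap, so this does not bound $|\mathrm{Success}(A)|$ by half. (A concrete obstruction: imagine $\mathrm{Cons}(T_i) = \{C_i, C_{\mathrm{bad}}\}$ for $i = 1,\dots,m$; then the constraints permit $\mathrm{Success}(A) \supseteq \{C_1,\dots,C_m\}$ with a single ``sacrificed'' cycle.) Your map $C^* \mapsto \sigma(C^*)$ depends on the chosen transcript, and there is no reason it is injective or an involution across all of $\mathcal D$, so it does not yield a pairing either. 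To make the Yao route airtight you would need to exhibit a single adversary strategy under which the transcript-fibers $\mathrm{pre}(T) = \{C : \text{adversary plays } T \text{ on } C\}$ each have size $\geq 2$ --- the fibers, not the larger sets $\mathrm{Cons}(T)$, are what partition the input space --- and that requires an explicit adversary construction you have not given. (It is only fair to note that the paper's own proof is terse on exactly this point for randomized algorithms, essentially establishing the deterministic case and leaving the randomized lift implicit; but your framing introduces the ``equivalence class'' error on top, so the gap is yours to close.) The observation that transpositions are not automorphisms of $C_n$ for $n \geq 5$ is correct and is a detail the paper leaves implicit.
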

\begin{proof}
Denote by $\mathcal{C}$ the family of $n$-vertex cycles on the vertex set $V$ with $|V| = n$, which serve as the input to our algorithm. 

Let $\mathbf{A}$ be a randomized query algorithm that executes $\ell := \log_3(n) - 1$ rounds and that, on any input $C \in \mathcal{C}$, succeeds with probability strictly above $\frac{1}{2}$. Then, by Yao's lemma, there exists a deterministic query algorithm $\mathbf{A}_{\text{det}}$ that succeeds on strictly more than half of the inputs in $\mathcal{C}$ and also runs in $\ell$ rounds. 
  
We first observe that, since $\mathbf{A}_{\text{det}}$ is deterministic, and we also assume that the oracle  answers are deterministic, for every $C \in \mathcal{C}$, there exists a unique execution of $\mathbf{A}_{\text{det}}$, i.e., a sequence of query vertices $V_1, \dots, V_{\ell}$, query answers $I_1, \dots, I_{\ell}$, and output $C_{\text{out}} \in \mathcal{C}$ produced by the algorithm. We observe that $C_{\text{out}}$ may be different from $C$ if the algorithm errs on input $C$.

Denote by $\mathcal{C}_1 \subseteq \mathcal{C}$ the subset of inputs on which the algorithm succeeds, and let $\mathcal{C}_0 = \mathcal{C} \setminus \mathcal{C}_1$ denote the inputs on which the algorithm fails. We will now argue that, for every input $C \in \mathcal{C}_1$ on which the algorithm succeeds, there exists a unique input $C' \in \mathcal{C}_0$ on which the algorithm fails, i.e., there exists an injective mapping from $\mathcal{C}_1$ into $\mathcal{C}_0$, which implies that $|\mathcal{C}_1| \le |\mathcal{C}_0|$. This is a contradiction to the fact that $\mathbf{A}_{\text{det}}$ is correct on strictly more than half of the instances in $\mathcal{C}$, which in turn implies that the algorithm $\mathbf{A}$ does not exist.

Let $C \in \mathcal{C}_1$ be an input on which $\mathbf{A}_{\text{det}}$ succeeds. Denote by $V_1, \dots, V_{\ell}$ the vertices queried by the algorithm in the $\ell$ query rounds and by $I_1, \dots, I_{\ell}$ the query responses. We associate the following complete ternary tree $\mathcal{T}$ with $\ell+1$ layers to the query vertices and responses in an execution of $\mathbf{A}$: 
 \begin{itemize}
  \item The root (layer 1) is labelled with $V = [n]$.
  \item For an internal node in layer $1 \le i \le \ell$ with label $U \subseteq V$, the node has three children with labels $U_1, U_2, U_3$ such that $U = U_1 \ \dot{\cup} \ U_2 \ \dot{\cup} \ U_3$, and 
  \begin{align*}
U_1 & = (U \cap V_i) \cap I_i \ , && \text{queried and reported} \\   
U_2 & = (U \cap V_i) \setminus I_i \ , && \text{queried and not reported}  \\
U_3 & = U \setminus V_i \ . && \text{not queried}  
  \end{align*}
 \end{itemize}
Denote by $L_1, L_2, \dots$ the leaves of $\mathcal{T}$ from left-to-right. We observe that, by construction, the leaves are disjoint and their union equals $V$, i.e., $\dot{\cup}_i L_i = V$. Consider now a leaf $L_i$ that contains at least two vertices $u,v$. The vertices $u,v$ are indistinguishable as they behaved exactly the same throughout the execution of the algorithm. Indistinguishable here means that the cycle $C'$ obtained from the cycle $C$ by swapping the positions of $u$ and $v$ leads to the exact same execution of the algorithm as the execution for $C$. The algorithm, however, fails on $C'$ as the output produced is $C_{\text{out}} = C$. Hence, as long as there exists a leaf that contains at least two vertices, we can identify an input on which the algorithm fails and establish our injective mapping from $\mathcal{C}_1$ to $\mathcal{C}_0$. To avoid that there are no leaves that contain two vertices, the number of leaves in $\mathcal{T}$ must be at least $n$. Since $\mathcal{T}$ is ternary and of depth $\ell+1$, we obtain that $\ell \ge \log_3(n)$ must hold, a contradiction to the assumption that $\ell = \log_3(n) - 1$. This completes the proof.



\end{proof}

Last, we give our $\Omega(\Delta^2)$ queries lower bound  for graphs of maximum degree $\Delta$. The key observation in our proof is that, for every non-edge $uv$ in the input graph $G$, there must exist an oracle response maximal independent set that contains both vertices $u$ and $v$, since, if the opposite was true then the algorithm could not distinguish between the input graph $G$ and the graph $G \cup \{uv\}$.

\begin{theorem}\label{thm:lb-1}
 For any $\Delta > 0$, every possibly randomized query algorithm with success probability strictly greater than $1/2$ for \textsf{Graph Reconstruction} using a \textsf{Maximal Independent Set} oracle  requires $\Omega(\Delta^2)$ queries on graphs of maximum degree $\Delta$. 
\end{theorem}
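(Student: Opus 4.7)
My plan is to exhibit a single hard input $G^{*}$ with $\Theta(\Delta^{2})$ non-edges in which each oracle response can certify at most one non-edge, and then run an indistinguishability argument showing that every non-edge must be explicitly certified. Take $G^{*} := K_{A} \sqcup K_{B}$, the disjoint union of two cliques with $|A| = |B| = \Delta$. Then $G^{*}$ has maximum degree $\Delta - 1 \le \Delta$, every independent set of every induced subgraph of $G^{*}$ has size at most $2$ (one vertex from each clique), and the non-edges of $G^{*}$ are exactly the $\Delta^{2}$ bipartite pairs $ab$ with $a \in A$ and $b \in B$.

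The core of the argument is an indistinguishability lemma. Fix a non-edge $ab$ and let $G^{**}_{ab} := G^{*} \cup \{ab\}$, which still has maximum degree at most $\Delta$. Suppose an execution on $G^{*}$ against some adversarial oracle produces a transcript that never contains the response $\{a, b\}$. Then the same sequence of responses is also a valid execution on input $G^{**}_{ab}$: every returned MIS $I_{i}$ satisfies $|I_{i}| \le 2$ and $I_{i} \ne \{a, b\}$, so $I_{i}$ is still an independent set in $G^{**}_{ab}$, and maximality of $I_{i}$ within the induced subgraph on $Q_{i}$ is preserved under edge additions. Consequently, the adversary on $G^{**}_{ab}$ can mirror the adversary on $G^{*}$, producing an identical transcript and forcing the algorithm to output the same graph on the two distinct inputs.

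To finish, I fix an adversarial oracle $\mathcal{O}^{*}$ for $G^{*}$, denote by $\ell$ the number of queries made on $(G^{*}, \mathcal{O}^{*})$, and let $X_{ab}$ be the indicator that the resulting transcript contains the response $\{a, b\}$. Since every MIS of $G^{*}$ has size at most $2$, at most one non-edge is certified per query, so $\ell \ge \sum_{ab} X_{ab}$. For each randomness seed $r$ with $X_{ab}(r) = 0$, the indistinguishability lemma ensures that $\mathbf{A}$ outputs the same graph on $G^{*}$ and on $G^{**}_{ab}$ and is therefore wrong on at least one of them; averaging over $r$ gives $\Pr[X_{ab} = 1] \ge \Pr[\mathbf{A}\text{ correct on }G^{*}] + \Pr[\mathbf{A}\text{ correct on }G^{**}_{ab}] - 1$, which is a positive constant since both success probabilities exceed $1/2$ by a constant gap. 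Summing over the $\Delta^{2}$ non-edges yields $\mathbb{E}[\ell] = \Omega(\Delta^{2})$ on input $G^{*}$, which implies the worst-case bound claimed in the theorem.

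The main subtlety I anticipate is the probabilistic bookkeeping, namely synchronising the fixed adversary for $G^{*}$ with the adversaries on each perturbed input $G^{**}_{ab}$ under a common random seed; once the mirror-adversary is defined carefully, the rest is a union-type averaging argument. The combinatorial core — the two-clique construction together with the size-$2$ ceiling on independent sets — is essentially immediate.
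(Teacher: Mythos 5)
Your proof is correct and follows essentially the same route as the paper's: two vertex classes turned into cliques so that every reported independent set has size at most two, the observation that a non-edge $ab$ whose pair is never reported makes $G$ and $G\cup\{ab\}$ indistinguishable, and a count over the $\Theta(\Delta^2)$ non-edges. The only difference is that the paper averages over the whole family of ``two cliques plus a bipartite graph'' instances, whereas you fix the single two-disjoint-cliques instance and make the averaging over the algorithm's randomness explicit --- a mild simplification that yields the same bound.
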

\begin{proof}
 For integers $N > 0$, let $\mathcal{H}_N$ denote the set of all bipartite graphs $H=(A, B, E)$ with $|A| = |B| = N$. Then, let $\mathcal{G}_N$ be the family of graphs obtained from $\mathcal{H}_N$ by turning the bipartitions $A$ and $B$ of each of its graphs $H=(A, B, E) \in \mathcal{H}_N$ into (disjoint) cliques. 
 
 Let $\mathbf{A}$ denote a randomized query algorithm that succeeds with probability strictly above $1/2$ on each input $G \in \mathcal{G}_N$, and for the sake of a contradiction, we assume that $\mathcal{A}$ executes at most $\ell = N^2 - 1$ queries. Then, by Yao's lemma, there exists a deterministic query algorithm $\mathcal{A}_{\text{det}}$ that also executes at most $\ell$ queries and succeeds on strictly more than half of the inputs in $\mathcal{G}_N$. 
 
 First, observe that $\alpha(G) \le 2$, for every $G \in \mathcal{G}_N$ since at most one vertex from $A$ and at most one vertex from $B$ can be included in any independent set. Hence, every independent set reported by $\mathbf{A}_{\text{det}}$ on any of the input graphs $G \in \mathcal{G}_N$ is of size at most $2$.
 
 Denote by $\mathcal{G}_N^1 \subseteq \mathcal{G}_N$ the subset of inputs on which $\mathcal{A}_{\text{det}}$ succeeds, and let $\mathcal{G}_N^0 = \mathcal{G}_N \setminus  \mathcal{G}_N^1$. We will show that there exists an injective map from $\mathcal{G}_N^1$ to $\mathcal{G}_N^0$, which implies that $|\mathcal{G}_N^1| \le |\mathcal{G}_N^0|$, a contradiction to the fact that $\mathcal{A}_{\text{det}}$ succeeds on strictly more than half of the instances. This in turn implies that algorithm $\mathbf{A}$ does not exists, which establishes the theorem.
 
 To this end, let $G \in \mathcal{G}_N^1$ be any instance and consider the execution of $\mathcal{A}_{\text{det}}$ on $G$, i.e., let $V_1, \dots, V_{\ell}$ denote the queries submitted in the $\ell$ query rounds, let $I_1, \dots, I_{\ell}$ denote the query responses, and let $G_{\text{out}} = G$ denote the output produced by the algorithm. 
 
 Let 
 $$F = \{(a,b) \in A \times B \ | \ \{a,b\} \neq I_i, \text{ for all }  i\} \ , $$
 i.e., the set of pairs of vertices $a,b$ that do not constitute a response from the oracle. We observe that every graph $G'$ obtained from $G$ by flipping the edge $ab$, i.e., by introducing $ab$ in $G'$ in case it is not contained in $G$ or by removing it from $G'$ in case it is contained in $G$, leads to the exact same execution of $\mathbf{A}_{\text{det}}$ as all oracle answers are consistent with both $G$ and $G'$. The algorithm however fails on $G'$ since the answer produced by the execution is $G_{\text{out}} = G$. Hence, as long as $F \neq \varnothing$, we can identify a graph $G'$ on which the algorithm fails and map the input $G$ to $G'$ in our injective map. To ensure that $F = \varnothing$, we require at least $|A| \cdot |B| = N^2$ queries. This however is a contradiction to the assumption that only $\ell = N^2 -1$ queries are executed. 
 
 Last, the result follows by observing that the maximum degree $\Delta$ of any graph in $\mathcal{G}_N$ is $2N-1$, which implies that at least $(\frac{\Delta+1}{2})^2$ query rounds are required.
\end{proof}
We remark that Theorem~\ref{thm:lb-1} also shows that $\Omega(m)$ queries are needed for reconstructing graphs on $m$ edges. Observe that every graph of the family of graphs used in Theorem~\ref{thm:lb-1} has $O(\Delta^2)$ edges.

\section{Conclusion}\label{sec:conclusion}
In this paper, we initiated the study of the \textsf{GR} problem using an \textsf{MIS} oracle. We gave a non-adaptive randomized algorithm that reconstructs a graph with maximum degree $\Delta$ using $O(\Delta^2 \log n)$ queries, and a non-adaptive deterministic query algorithm that uses $O(\Delta^3 \log(\frac{n}{\Delta}))$ queries. Both these algorithms require advanced knowledge of $\Delta$ in order to operate, which is unavoidable. We showed that both algorithms can be turned into adaptive algorithms with $O(\log \Delta)$ rounds of adaptivity and a similar number of queries that do not require advanced knowledge of $\Delta$. We also proved that, for adaptive randomized algorithms, $\Omega(\Delta^2)$ queries are necessary, and that such algorithms require $\Omega(\log n)$ queries even if the input graph is an $n$-vertex cycle. Furthermore, we showed that non-adaptive deterministic query algorithms require $\Omega(\Delta^3 / \log^2(\Delta))$ queries, which renders our deterministic algorithm optimal up to poly-log factors.

While, up to lower order terms, the \textsf{MIS} query complexity of \textsf{GR} is settled for randomized algorithms and for non-adaptive deterministic algorithms, it is unclear whether there are adaptive deterministic algorithms that are stronger than non-adaptive deterministic algorithms. More concretely, 
 is there an adaptive deterministic query algorithm that requires fewer than $O(\Delta^3 \log(\frac{n}{\Delta}))$ queries? 

\bibliography{kot25}

\appendix

\section{Independent Set Queries for Graph Reconstruction}\label{app:is-queries}
For completeness, we will now prove that the number of \textsf{Independent Set} queries (or, in fact, any type of query that yields a binary answer) needed for \textsf{GR} on graphs of maximum degree $\Delta \in \Omega(\log n)$ is $\Omega(n \Delta \log(\frac{n}{\Delta}))$.

Angluin and Chen \cite{ac08} observe that, since an independent set query has a binary output, at least $\log |\mathcal{G}|$ queries are needed to distinguish any two graphs in a graph family $\mathcal{G}$. 

We will now prove that the number of $n$-vertex graphs with maximum degree $\Delta$ is $2^{\Omega(n\Delta \log(n/\Delta))}$, which, by the previous argument, implies that $\Omega(n \Delta \log(n/\Delta))$ \textsf{Independent Set} queries are needed to distinguish these graphs. 

Our proof uses entropy-based arguments. We refer the reader to \cite{g14} for an excellent overview of how entropy is connected to counting problems.

\begin{lemma}
 The number of bipartite $2n$-vertex graphs with bipartitions $A$ and $B$ each of size $n$ and with maximum degree $\Delta = \Omega(\log n)$ is at least:
 $$2^{\frac{1}{2} n \Delta \log(\frac{n}{\Delta}) - 2} \ . $$ 
\end{lemma}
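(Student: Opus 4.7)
My plan is to lower bound the count by restricting attention to a single structured subfamily: bipartite graphs on a fixed bipartition $A \cup B$ with $|A| = |B| = n/2$, where every vertex in $A$ has exactly $d := \lfloor \Delta/2 \rfloor$ neighbors in $B$. Any such graph has left-degrees exactly $d \le \Delta$, so to conclude it has maximum degree $\Delta$ it suffices to bound the right-degrees by $\Delta$. The key quantitative observations are that (i) the total number of configurations, before imposing any condition on right-degrees, is exactly $\binom{n/2}{d}^{n/2}$, since each left vertex independently picks a $d$-subset of $B$, and (ii) no two distinct configurations yield the same graph, so this is an honest count, not a weighted count.

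Next, I would argue that a constant fraction of these configurations actually have every right-degree bounded by $\Delta$. Under the uniform distribution over the $\binom{n/2}{d}^{n/2}$ configurations, a fixed right vertex $b \in B$ has degree equal to a sum of $n/2$ independent Bernoulli random variables, one per left vertex, each with success probability $d/(n/2)$; the expectation is $d$. A standard Chernoff bound yields $\Pr[\deg(b) > 2d] \le e^{-d/3}$, and since $d = \Theta(\Delta) = \Omega(\log n)$, this is at most $1/n^2$ for the hidden constant large enough. A union bound over the $n/2$ right vertices then shows that the configuration fails to give max degree $\le 2d \le \Delta$ with probability at most $1/2$, so at least half of the configurations yield valid graphs.

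Finally, I would estimate $\binom{n/2}{d}$ using the routine bound $\binom{m}{k} \ge (m/k)^k$, giving $\binom{n/2}{d} \ge (n/\Delta)^{\Delta/2}$, and multiply through: the number of $n$-vertex graphs with maximum degree $\Delta$ is at least
\[
\tfrac{1}{2} \binom{n/2}{d}^{n/2} \;\ge\; \tfrac{1}{2} \left( \tfrac{n}{\Delta} \right)^{(\Delta/2)(n/2)} \;=\; 2^{\frac{1}{4} n \Delta \log(n/\Delta) - 1},
\]
which comfortably matches the claimed bound (the $-2$ in the statement absorbs any slack from $d = \lfloor \Delta/2 \rfloor$ versus $\Delta/2$). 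I do not really need a more sophisticated entropy argument; the single entropy-flavored ingredient is the estimate $\log \binom{n/2}{d} \gtrsim d \log(n/\Delta)$. The only mildly delicate step is the Chernoff/union-bound check, which is exactly where the hypothesis $\Delta = \Omega(\log n)$ is used; everything else is mechanical.
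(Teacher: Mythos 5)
Your proof is correct and takes a genuinely different route from the paper's. The paper runs an entropy argument: it samples $G(n,p)$ with $p = \frac{\Delta}{2(n-1)}$, conditions on the event that no degree exceeds $\Delta$, and extracts the count from $H(X \mid E=1)$ via two applications of the chain rule. You instead exhibit an explicit structured subfamily --- bipartite graphs on a fixed balanced bipartition with all left-degrees equal to $d = \lfloor \Delta/2 \rfloor$ --- whose size $\binom{n/2}{d}^{n/2}$ you can write down exactly, and then use Chernoff plus a union bound to show that at least half of its members also have right-degrees at most $\Delta$. Both proofs invoke the same concentration step in the same place (and both lean on the hidden constant in $\Delta = \Omega(\log n)$ being large enough for the tail bound to survive the union bound); the difference is that you replace the chain-rule bookkeeping with a direct product count, which is more elementary and arguably more transparent. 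What the entropy route buys is generality --- it lower-bounds the count of graphs satisfying any high-probability degree condition without having to design a family by hand --- but nothing in this lemma requires that generality. One cosmetic slip: for odd $\Delta$ you have $d = (\Delta-1)/2$, so your estimate only yields exponent $\frac{1}{4}n(\Delta-1)\log(\frac{n}{\Delta})$, a deficit of $\frac{n}{4}\log(\frac{n}{\Delta})$ that the additive $-2$ does not absorb. This is immaterial for the $2^{\Omega(n\Delta\log(n/\Delta))}$ conclusion the corollary needs, and is repaired by taking $d = \lceil \Delta/2 \rceil$ and bounding $\Pr[\deg(b) > \Delta]$ rather than $\Pr[\deg(b) > 2d]$ in the Chernoff step.
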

\begin{proof}
 We consider the following probabilistic process: Let $G=(A, B, E)$ be a bipartite $2n$-vertex graph with $|A| = |B| = n$ obtained by inserting every potential edge $ab \in A \times B$ into $G$ with probability $\frac{\Delta}{2n}$. Denote by $E$ the indicator random variable of the event that $G$ does not contain a vertex of degree larger than $\Delta$.

 We will now bound the quantity $|\text{range}(G | E = 1)|$ from below, which constitutes a set of bipartite graphs with maximum degree $\Delta$.
 
 To this end, first, observe that:
 \begin{align*}
  \log(|\text{range}(X | E = 1)|) \ge H(X | E = 1) \ ,
 \end{align*}
 which implies that it is enough to bound $H(X | E = 1)$. To bound this quantity, we apply the chain rule for entropy twice on the expression $H(XE)$:
\begin{align*}
 H(XE) &= H(X) + H(E | X) \ , \mbox{ and} \\
 H(XE) &= H(E) + H(X | E) = H(E) + \Pr[E=0]H(X | E=0) + \Pr[E=1] H(X | E=1) \ ,
\end{align*}
which implies:
\begin{align}\label{eqn:392}
 H(X | E=1) & = \frac{H(X) + H(E | X) - H(E) - \Pr[E=0]H(X | E=0)}{\Pr[E=1]} \nonumber \\
 & \ge H(X) + H(E | X) - H(E) - \Pr[E=0]H(X | E=0) \nonumber  \\
 & \ge H(X) - 1 - \Pr[E=0]H(X | E=0) \ ,
 \end{align}
 using the fact that entropy is non-negative, and that the inequality $H(E | X) \le H(E) \le 1$ holds.
 
 Before bounding Inequality~\ref{eqn:392} further, we first prove that $\Pr[E=0]$ is small and we give a bound on $H(X)$.
 
 To see that $\Pr[E=0]$ is small, consider any vertex $v \in A \cup B$. Then, the expected degree of $v$ in $G$ is $\Delta/2$, and, by a Chernoff bound, the probability that the degree of $v$ is larger than $\Delta$ is at most $\frac{1}{2 \cdot n^4}$ (using the assumption that $\Delta = \Omega(\log n)$). By the union bound, the probability that there exists a vertex of degree larger than $\Delta$ is thus at most $\frac{1}{n^3}$, or, equivalently, $\Pr[E=0] \le \frac{1}{n^3}$. 
 
Next, we bound $H(X)$. Since each of the $n^2$ potential edges is included in $G$ independently of all other edges, we obtain:
 \begin{align*}
  H(X) &= n^2 \cdot H_2(\frac{\Delta}{2n}) \\
  & \ge n^2 \log(\frac{2n}{\Delta}) \frac{\Delta}{2n} \ge \frac{1}{2} n \Delta \log(\frac{n}{\Delta}) \ , 
 \end{align*}
 where we bounded the binary entropy function by considering only one of its the two terms. 
 
We are now ready to further simplify Inequality~\ref{eqn:392}, which then yields the result:
 \begin{align*}
 H(X | E=1) & \ge H(X) - 1 - \Pr[E=0]H(X | E=0) \\
 & \ge \frac{1}{2} n \Delta \log(\frac{n}{\Delta}) - 1 -\frac{1}{n^3} \cdot \log(\text{range}(X | E = 0)) \\
 & \ge \frac{1}{2} n \Delta \log(\frac{n}{\Delta}) - 1 -\frac{1}{n^3} \cdot n^2 \\
 & \ge \frac{1}{2} n \Delta \log(\frac{n}{\Delta}) - 1 -\frac{1}{n} \\
 & \ge \frac{1}{2} n \Delta \log(\frac{n}{\Delta}) - 2 \ . 
\end{align*}
 \end{proof}
 
 \begin{corollary}\label{cor:is-queries}
  The number of \textsf{Independent Set} queries needed for \textsf{GR} on $n$-vertex graphs of maximum degree $\Delta = \Omega(\log n)$ is $\Omega(n \Delta \log(\frac{n}{\Delta}))$. 
 \end{corollary}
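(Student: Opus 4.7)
The plan is to combine the preceding lemma with the Angluin–Chen information-theoretic observation stated at the start of the appendix. That observation says that any algorithm whose queries produce binary answers must, in the worst case, use at least $\log_2 |\mathcal{G}|$ queries to distinguish the members of a graph family $\mathcal{G}$, since the entire transcript of an $\ell$-query execution lives in $\{0,1\}^\ell$ and two different input graphs must produce different transcripts. Applied to the family $\mathcal{G}_{n,\Delta}$ of $n$-vertex graphs of maximum degree $\Delta$, this immediately reduces the corollary to a lower bound on $|\mathcal{G}_{n,\Delta}|$.

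First I would instantiate $\mathcal{G} = \mathcal{G}_{n,\Delta}$, where every graph in this family is a valid \textsf{GR} input. The preceding lemma then gives
\[
|\mathcal{G}_{n,\Delta}| \; \ge \; 2^{\frac{1}{4} n \Delta \log(\frac{n}{\Delta}) - 2}.
\]
Taking binary logarithms and applying the Angluin–Chen bound yields at least $\frac{1}{4} n \Delta \log(\frac{n}{\Delta}) - 2$ queries, which is $\Omega(n\Delta \log(\frac{n}{\Delta}))$ under the hypothesis $\Delta = \Omega(\log n)$ (so that the expression is positive and the additive constant is absorbed).

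For full generality, I would spend one sentence justifying the extension to randomized algorithms via Yao's principle: if the deterministic worst-case bound on $\mathcal{G}_{n,\Delta}$ is $\Omega(n\Delta\log(n/\Delta))$, then by an averaging argument over the uniform distribution on $\mathcal{G}_{n,\Delta}$, the same asymptotic bound applies in expectation to any randomized algorithm with constant success probability; alternatively, since the statement quantifies only over algorithms, quoting the deterministic bound directly suffices in the spirit of the rest of the appendix.

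The hardest part of the argument has already been discharged by the counting lemma, so no real obstacle remains here — the corollary is essentially a one-line corollary of the lemma plus the information-theoretic observation. The only care needed is being explicit that distinguishability of transcripts is what allows reconstruction, and that the hypothesis $\Delta = \Omega(\log n)$ is exactly what the lemma requires to make $\Pr[E = 0]$ small enough for the entropy calculation to go through.
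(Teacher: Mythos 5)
Your proposal is correct and matches the paper's intended argument exactly: the corollary is obtained by combining the counting lemma with the Angluin--Chen observation that $\log_2|\mathcal{G}|$ binary-answer queries are needed to distinguish the members of a family $\mathcal{G}$. The extra remark on extending to randomized algorithms is harmless but not something the paper itself spells out.
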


\end{document}